%% file: main.tex
\documentclass[journal]{IEEEtran}

\usepackage{amsmath,amssymb,amsthm,mathtools}
\usepackage{algorithm}
\usepackage{algorithmic}
\usepackage{booktabs}
\usepackage{cite}
\usepackage{enumitem}
\usepackage{graphicx}
\usepackage{tikz}
\usepackage{xcolor}
\usepackage{url}
\usepackage[hidelinks]{hyperref}

\usetikzlibrary{arrows.meta,positioning,calc,fit,backgrounds}

\definecolor{tnsmblue}{HTML}{1F4E79}
\definecolor{tnsmred}{HTML}{B22222}
\definecolor{tnsmgray}{HTML}{555555}

\newtheorem{theorem}{Theorem}[section]
\newtheorem{proposition}[theorem]{Proposition}
\newtheorem{lemma}[theorem]{Lemma}
\newtheorem{definition}[theorem]{Definition}
\newtheorem{remark}[theorem]{Remark}

\begin{document}

\title{Security-Induced Braess Paradoxes in Service Function Chain Orchestration}

\author{Daniel~Commey
and~Bin~Mai%
\thanks{D.~Commey is with the Department of Computer Engineering and Computer Science, California State University, Long Beach, Long Beach, CA 90840, USA (e-mail: daniel.commey@csulb.edu).}%
\thanks{B.~Mai is with the College of Business, Western Carolina University (e-mail: bmai@wcu.edu).}}

\maketitle

\begin{abstract}
NFV/SDN orchestration lets operators instantiate and steer traffic through virtual firewalls, IDS/IPS replicas, WAF clusters, zero-trust gateways, backup inspection paths, and migration targets on demand. Operators often treat these options as monotone improvements: more inspection capacity, lower nominal latency, or broader placement flexibility should not degrade the service. That intuition can fail even when the new option is locally attractive. We study a security-induced Braess paradox in service function chain (SFC) orchestration, where adding a defensive option worsens the post-adaptation equilibrium by concentrating traffic and adversarial value on shared security resources. We define Braessian security-management actions, derive a sufficient condition for paradox emergence under affine load-dependent VNF delay, and give a pre-deployment orchestration screen that rejects, caps, or reserves harmful options. A multi-tenant SFC experiment suite applies the model to four topology-derived settings: a fat-tree datacenter, NSFNET-style WAN, GEANT-style WAN, and edge/fog topology. Under default parameters in the Braessian regime identified by the theory, naive defensive expansion raises equilibrium service cost by 27.2--30.8\% and increases risk concentration by factors of 6.1--9.7. Paradox-aware constrained use keeps the residual penalty below 1.9\%, reduces service cost by 20.0--22.1\% relative to naive expansion, and lowers a concentration-sensitive attack-loss proxy by 93.5\% on average.
\end{abstract}

\begin{IEEEkeywords}
Service function chaining, security orchestration, Braess paradox, NFV, SDN, network management, resilience.
\end{IEEEkeywords}

\section{Introduction}

\IEEEPARstart{N}{etwork} security has become an orchestration problem. Rather than deploy fixed middleboxes at fixed choke points, operators instantiate and steer traffic through virtual firewall (FW) instances, intrusion detection system (IDS) replicas, web application firewall (WAF) clusters, zero-trust gateways, moving-target migrations, and backup inspection paths. The Service Function Chaining (SFC) architecture formalizes this steering model by treating a network service as an ordered traversal of service functions \cite{rfc7665}. Network Function Virtualization and Software-Defined Networking (NFV/SDN) orchestration then decides where functions should run and which service chains should use them \cite{etsi2014mano,sun2018sfc,bari2016orchestrating}.

Standard engineering practice assumes monotonicity: if a security option improves a local metric, adding it should not make the managed service worse. A faster IDS replica, a central zero-trust gateway, or a backup inspection tunnel looks beneficial because it lowers free-flow cost, increases coverage, or expands feasible placements. We study when this intuition fails.

The mechanism differs from ordinary security overhead. Forcing every packet through an expensive inspection function raises latency directly. Here the added option is optional and locally attractive. The degradation appears only after traffic, tenants, orchestrators, or attackers adapt to its availability: the feasible action space changes, and the new equilibrium concentrates load and adversarial value on a shared security resource.

This non-monotonicity is analogous to the classical Braess paradox in congested routing, where adding a road can increase equilibrium travel time \cite{braess2005paradox,wardrop1952theoretical,beckmann1956studies,roughgarden2002selfish,roughgarden2005selfish}. In our setting, the added edge is a defensive resource---a virtual network function (VNF), gateway, inspection shortcut, backup path, or migration target---and the harm appears as degraded service-chain latency, higher overload, greater blast radius, or larger expected adversarial loss.

\begin{figure*}[t]
\centering
\begin{tikzpicture}[
  font=\sffamily\small,
  >=Latex,
  panel/.style={draw=black!35, rounded corners=3pt, fill=black!1, inner sep=6pt},
  endpoint/.style={circle, draw=black!70, fill=white, minimum size=6.5mm, line width=0.75pt},
  vnf/.style={draw=tnsmblue, fill=tnsmblue!6, rounded corners=2pt, minimum width=11mm, minimum height=6mm, line width=0.75pt},
  gateway/.style={draw=tnsmred, fill=tnsmred!7, rounded corners=2pt, minimum width=13mm, minimum height=7mm, line width=0.9pt},
  note/.style={draw=black!35, fill=white, rounded corners=2pt, align=left, inner sep=4pt, text width=3.6cm, anchor=north, minimum height=1.78cm},
  flow/.style={->, line width=0.95pt, tnsmblue!90},
  heavy/.style={->, line width=1.55pt, tnsmred},
  cappedflow/.style={->, line width=1.05pt, green!45!black, dashed}
]

\node[panel, minimum width=4.25cm, minimum height=4.55cm, label={[font=\bfseries]above:{(a) Distributed chains}}] (p1) at (0,-0.1) {};
\node[panel, minimum width=4.25cm, minimum height=4.55cm, label={[font=\bfseries]above:{(b) Naive gateway exposure}}] (p2) at (5.0,-0.1) {};
\node[panel, minimum width=4.25cm, minimum height=4.55cm, label={[font=\bfseries]above:{(c) Paradox-aware exposure}}] (p3) at (10.0,-0.1) {};

\foreach \x/\prefix in {0/a,5/b,10/c}{
  \node[endpoint] (\prefix s) at (\x-1.75,1.1) {$s$};
  \node[endpoint] (\prefix d) at (\x+1.75,1.1) {$d$};
  \node[vnf] (\prefix fw1) at (\x-0.85,1.55) {FW};
  \node[vnf] (\prefix ids1) at (\x,1.55) {IDS};
  \node[vnf] (\prefix waf1) at (\x+0.85,1.55) {WAF};
  \node[vnf] (\prefix fw2) at (\x-0.85,0.15) {FW};
  \node[vnf] (\prefix ids2) at (\x,0.15) {IDS};
  \node[vnf] (\prefix waf2) at (\x+0.85,0.15) {WAF};
}

\draw[flow] (as) -- (afw1) -- (aids1) -- (awaf1) -- (ad);
\draw[flow] (as) -- (afw2) -- (aids2) -- (awaf2) -- (ad);
\node[note] at (0,-0.5) {Traffic splits over distributed FW--IDS--WAF chains; risk and load are dispersed.};

\node[gateway] (bgw) at (5.0,0.85) {ZT-GW};
\draw[flow, opacity=0.35] (bs) -- (bfw1) -- (bids1) -- (bwaf1) -- (bd);
\draw[flow, opacity=0.35] (bs) -- (bfw2) -- (bids2) -- (bwaf2) -- (bd);
\draw[heavy] (bs) -- (bgw);
\draw[heavy] (bgw) -- (bd);
\node[note] at (5,-0.5) {Low free-flow delay attracts most chains; the gateway becomes a congestion and attack chokepoint.};

\node[gateway] (cgw) at (10.0,0.85) {ZT-GW};
\draw[flow] (cs) -- (cfw1) -- (cids1) -- (cwaf1) -- (cd);
\draw[flow] (cs) -- (cfw2) -- (cids2) -- (cwaf2) -- (cd);
\draw[cappedflow] (cs) -- (cgw);
\draw[cappedflow] (cgw) -- (cd);
\node[note] at (10,-0.5) {Screening exposes the option only under caps or failover rules, preserving defense without full concentration.};

\draw[->, line width=0.9pt, tnsmgray] (p1.east) -- node[above, font=\scriptsize, text=tnsmgray] {add option} (p2.west);
\draw[->, line width=0.9pt, tnsmgray] (p2.east) -- node[above, font=\scriptsize, text=tnsmgray] {screen/cap} (p3.west);
\end{tikzpicture}
\caption{Security-induced Braess effect in SFC orchestration. The added zero-trust gateway is locally attractive, but unrestricted exposure shifts the post-adaptation equilibrium toward a shared security resource. Paradox-aware orchestration treats the gateway as an equilibrium-shaping resource and limits ordinary use when unrestricted exposure is harmful.}
\label{fig:sfc-overview}
\end{figure*}

Fig.~\ref{fig:sfc-overview} illustrates the pattern. Before intervention, tenant chains are distributed across multiple FW--IDS--WAF sequences. The operator adds a zero-trust gateway with low generalized free-flow cost. Naive orchestration exposes this option to all tenants. Traffic then concentrates on the new gateway, which couples the load-sensitive ingress and egress security planes and becomes an attacker-preferred chokepoint.

The main contributions are:
\begin{itemize}[leftmargin=*,topsep=2pt,itemsep=2pt]
    \item \textbf{Failure-mode formulation.} Security-induced Braess paradoxes are defined as management actions that expand defensive feasibility but worsen post-adaptation service cost or adversarial loss.
    \item \textbf{SFC equilibrium model.} Multi-tenant SFC routing is modeled over shared link and VNF resources with affine load-dependent delay, risk exposure, and Wardrop equilibrium.
    \item \textbf{Structural condition and mitigation.} We derive a sufficient condition for a defensive shortcut to be Braessian and give a paradox-aware screening algorithm that rejects or caps harmful options.
    \item \textbf{Operational baselines.} We compare unrestricted exposure against load-aware caps, risk-aware scoring, min--max utilization control, and a marginal-cost system optimum.
    \item \textbf{Topology-derived evaluation.} We implement the model on datacenter, WAN, and edge/fog topologies with multiple tenant requests and summary figures. Naive expansion increases equilibrium cost by up to 30.8\%, while paradox-aware constrained use reduces the naive penalty by about an order of magnitude.
\end{itemize}

\section{Related Work}

\subsection{Braess Effects and Selfish Routing}

Wardrop's equilibrium principle states that used routes have no higher perceived cost than unused alternatives \cite{wardrop1952theoretical}. Beckmann et al.\ gave the convex potential formulation for separable congestion games \cite{beckmann1956studies}. Braess introduced the paradox in transportation planning \cite{braess1968paradoxon}, later translated and revisited in \cite{braess2005paradox}; subsequent work developed the paradox and the price of anarchy in general networks \cite{murchland1970braess,roughgarden2002selfish,roughgarden2005selfish}. Braess-type effects have also been studied directly in communication and computing systems, including non-cooperative routing in computer networks \cite{korilis1999avoiding}, loss networks \cite{bean1997loss}, and wireless technology upgrades whose equilibria can worsen despite higher nominal capacity \cite{dinitz2013wireless}. The setting here differs in the object being added: the new resource is a defensive function or orchestration choice, not a transportation link or ordinary communication link.

\subsection{SFC/NFV Orchestration}

SFC decomposes network services into ordered service functions and steering rules \cite{rfc7665}. NFV and SDN make these chains programmable by decoupling network functions from proprietary appliances and exposing control-plane steering mechanisms \cite{etsi2014mano,kreutz2015sdn,mijumbi2016nfv}. SFC orchestration research addresses placement, routing, multi-domain coordination, and resource constraints \cite{mehraghdam2014chains,cohen2015placement,bari2016orchestrating,sun2018sfc,beck2017coordvnf}. Surveys further organize the design space around traffic steering, lifecycle management, and embedding constraints \cite{medhat2017sfc,hantouti2019traffic,bhamare2016survey}. In that setting, the choice of placement is only part of the problem; adding a plausible security option can also change the induced equilibrium in a harmful way.

\subsection{Security Orchestration and Moving Target Defense}

Moving target defense and policy-driven security orchestration use programmability to adapt the attack surface. Yoon et al.\ propose attack-graph-based moving target defense in SDN and evaluate shuffling decisions by attack success and cost \cite{yoon2020attackgraph}; Cho et al.\ survey proactive and adaptive MTD techniques \cite{cho2020mtdsurvey}. Semantic-aware SDN/NFV security orchestration similarly aims to enforce virtual security functions while respecting policy consistency and QoS \cite{zarca2020semantic}. These lines of work motivate automated defense; they do not treat the introduction of a defensive option itself as a possible source of equilibrium degradation.

\subsection{Gap Relative to Prior Work}

Prior SFC/NFV work generally asks how to select a feasible placement or route once the action space has been specified. Prior security-orchestration work generally asks how to improve attack resistance, policy compliance, or response cost under a selected defense model. Between those layers sits a monotonicity assumption: expanding the defensive action space should not hurt if the orchestrator can choose among the added options. In adaptive service systems, a defensive option changes the local path score and the equilibrium induced by the service-management layer. The relevant question is therefore not whether an IDS, gateway, or migration target is useful in isolation, but whether exposing it to normal orchestration worsens the induced service state.

Three questions follow. First, can a locally attractive security-management action be service-Braessian in an SFC system? Second, can an operator detect the effect with pre-deployment topology, demand, VNF capacity, and risk estimates? Third, can the operator preserve some defensive availability without accepting the unrestricted equilibrium penalty?

\section{System Model}
\label{sec:model}

We model a managed service system as
\[
\mathcal{M}=(G,\mathcal{R},\mathcal{F},\mathcal{P},\ell,\rho),
\]
where $G=(V,E)$ is the substrate topology, $\mathcal{R}$ is a set of tenant service requests, $\mathcal{F}$ is the set of deployed service and security functions, $\mathcal{P}$ is the set of feasible SFC paths, $\ell$ gives load-dependent resource delay, and $\rho$ gives resource risk exposure.

\begin{table*}[t]
\centering
\caption{Notation used in the SFC Braess model}
\label{tab:notation}
\begin{tabular}{p{0.14\textwidth}p{0.78\textwidth}}
\toprule
Symbol & Meaning \\
\midrule
$G=(V,E)$ & Substrate topology with node set $V$ and link set $E$. \\
$\mathcal{R}$, $r$, $d_r$ & Tenant request set, request index, and demand of request $r$. \\
$\mathcal{F}$ & Deployed service and security functions, including FW, IDS, WAF, and gateways. \\
$\mathcal{P}_r$, $p$ & Feasible SFC paths for request $r$ and individual path index. \\
$f_p$, $f^\star$ & Flow assigned to path $p$ and post-adaptation equilibrium flow. \\
$e$, $y_e(f)$ & Link or VNF resource and aggregate load on resource $e$. \\
$b_e$, $a_e$, $u_e$ & Fixed delay, load-sensitivity coefficient, and capacity of resource $e$. \\
$\ell_e(y)$, $c_p(f)$ & Resource delay and end-to-end path cost under flow $f$. \\
$\rho_e$, $\chi_e$ & Risk weight and exposure multiplier used in post-equilibrium attack evaluation. \\
$\Delta$, $\mathcal{M}+\Delta$ & Security-management action and post-intervention managed system. \\
$J(\mathcal{M})$, $\Pi(\Delta)$ & Equilibrium average service cost and normalized paradox penalty. \\
$\mathrm{SBR}$, $G_A$ & Security Braess Ratio and Aware gain relative to naive expansion. \\
$A(f)$, $S_{\mathrm{DDoS}}(f)$ & Expected targeted attack-loss proxy and deterministic DDoS stress-loss proxy. \\
$\tau$, $\kappa$, $\mathcal{K}$ & Screening threshold, per-request cap fraction, and cap-grid searched by the orchestrator. \\
\bottomrule
\end{tabular}
\end{table*}

Each request $r\in\mathcal{R}$ has demand $d_r$ and feasible chain set $\mathcal{P}_r$. A path $p\in\mathcal{P}_r$ is an ordered sequence of substrate links and VNF resources. Let $f_p\ge 0$ be the equilibrium flow on path $p$. Feasible flows satisfy
\[
\sum_{p\in\mathcal{P}_r} f_p = d_r,\qquad \forall r\in\mathcal{R}.
\]
For resource $e$, load is
\[
y_e(f)=\sum_{p:e\in p} f_p.
\]
Resources include substrate links, firewall pools, IDS replicas, WAF clusters, and shared gateways. Each resource has affine delay
\[
\ell_e(y_e)=b_e+a_e\frac{y_e}{u_e},
\]
where $b_e$ is fixed delay, $a_e$ is load sensitivity, and $u_e$ is capacity. The service cost of path $p$ is
\[
c_p(f)=\sum_{e\in p}\ell_e(y_e(f)).
\]

\begin{definition}[Multi-tenant SFC equilibrium]
A feasible flow $f^\star$ is an SFC Wardrop equilibrium if for every request $r$ and paths $p,q\in\mathcal{P}_r$,
\[
f_p^\star>0 \Rightarrow c_p(f^\star)\le c_q(f^\star).
\]
\end{definition}

\begin{lemma}[Beckmann equivalence for SFC equilibrium]
\label{lem:beckmann}
With nondecreasing separable resource delays, a feasible flow is a multi-tenant SFC Wardrop equilibrium if and only if it solves the Beckmann program
\begin{align}
\min_f \quad & \Phi(f)=\sum_e \int_0^{y_e(f)}\ell_e(z)\,dz \\
\text{s.t.}\quad & \sum_{p\in\mathcal{P}_r} f_p=d_r,\quad f_p\ge 0.
\end{align}
\end{lemma}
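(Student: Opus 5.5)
The plan is the standard convex-optimization argument: show that $\Phi$ is convex on the feasible polytope, then identify the KKT conditions of the Beckmann program with the Wardrop conditions. First, write $\Phi(f)=\sum_e L_e(y_e(f))$ with $L_e(y)=\int_0^y \ell_e(z)\,dz$. Because $\ell_e$ is nondecreasing and continuous (here affine with $a_e/u_e\ge 0$), each $L_e$ is convex and continuously differentiable with $L_e'=\ell_e$. Since $y_e(f)=\sum_{p\ni e}f_p$ is linear in $f$, $\Phi$ is a sum of convex functions composed with linear maps, so it is convex and differentiable in the path variables. The chain rule gives
\[
\frac{\partial \Phi}{\partial f_p}(f)=\sum_{e\in p}\ell_e(y_e(f))=c_p(f).
\]
So the gradient of the potential is exactly the vector of path costs.

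Second, the feasible set is a product over requests of scaled simplices $\{f_r\ge 0,\ \sum_{p\in\mathcal{P}_r}f_p=d_r\}$. This is a nonempty polytope with linear constraints, so Slater-type qualification is automatic and the KKT conditions are necessary for optimality. By convexity of $\Phi$ they are also sufficient. I will state them with multipliers $\lambda_r$ for the demand constraints and $\mu_p\ge 0$ for nonnegativity:
\[
c_p(f)-\lambda_r-\mu_p=0,\qquad \mu_p f_p=0,\qquad p\in\mathcal{P}_r.
\]
Equivalently, $c_p(f)\ge\lambda_r$ for all $p\in\mathcal{P}_r$, with equality whenever $f_p>0$.

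Third, I show that these KKT conditions are exactly the Wardrop conditions.

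\emph{KKT implies Wardrop.} Take $p,q\in\mathcal{P}_r$ with $f_p>0$. Then $c_p=\lambda_r\le c_q$, which is the Wardrop inequality.

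\emph{Wardrop implies KKT.} Set $\lambda_r=\min_{q\in\mathcal{P}_r}c_q(f^\star)$ and $\mu_p=c_p(f^\star)-\lambda_r\ge 0$. The Wardrop condition forces $\mu_p=0$ on every used path, so complementary slackness holds.

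As a cross-check, or in place of KKT, I can use the first-order variational inequality $\nabla\Phi(f^\star)^\top(f-f^\star)\ge 0$ for all feasible $f$. This follows from Wardrop directly: within each request, shifting flow only moves mass toward paths whose cost is at least the common minimum on used paths. Convexity of $\Phi$ then makes the variational inequality sufficient for global optimality.

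The main obstacle is modest. It is making sure the reduction from resource-level load to path-level gradient is valid when a path lists a resource more than once. For example, a chain might revisit a link, or a resource might appear as both an ingress and egress segment. In that case $y_e$ should count multiplicity, and $c_p$ must sum $\ell_e$ with the same multiplicity. I will state this convention explicitly so that $\partial\Phi/\partial f_p=c_p$ holds exactly.

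Continuity of $\ell_e$ is needed for differentiability of $\Phi$. It holds for the affine delays of the model, and I will note that the general nondecreasing case would need continuity as well.
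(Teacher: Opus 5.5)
Your proposal is correct and follows the paper's proof essentially step by step. Both arguments get convexity of $\Phi$ from linearity of $y_e$ and monotonicity of $\ell_e$, use the identity $\partial\Phi/\partial f_p=c_p(f)$, and match the per-request KKT conditions with the Wardrop conditions in both directions. Your explicit multipliers and your remarks on continuity of $\ell_e$ and on resource multiplicity make precise what the paper leaves implicit, without changing the argument.
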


\begin{IEEEproof}
The feasible set is a product of request-simplexes and is convex and compact. Because $y_e(f)$ is linear in $f$ and each $\ell_e(\cdot)$ is nondecreasing, $\Phi$ is convex. The derivative of $\Phi$ with respect to path flow $f_p$ is
\[
\frac{\partial \Phi}{\partial f_p}
=\sum_{e\in p}\ell_e(y_e(f))=c_p(f).
\]
The Karush--Kuhn--Tucker conditions for each request simplex therefore state that any path with positive flow has minimum path cost among that request's feasible paths, while any unused path has cost no smaller than the request's multiplier. These are exactly the Wardrop conditions. Conversely, any Wardrop flow satisfies these KKT conditions and therefore minimizes the convex potential.
\end{IEEEproof}

Lemma~\ref{lem:beckmann} is the standard Beckmann equivalence \cite{beckmann1956studies} specialized to SFC path--resource structure. We state it explicitly because (i) strict convexity of $\Phi$ under affine delays guarantees a unique equilibrium load vector, and (ii) the convex program gives a closed-form way to compute the post-adaptation flow.

\subsection{Wardrop Versus System Optimum}

Wardrop equilibrium describes settings in which tenants, traffic classes, or local orchestration policies react to path costs without fully internalizing the congestion externality imposed on other chains. Such behavior can arise in multi-tenant NFV/SDN systems when chain selection is split across domains, service controllers, or policy layers. To separate equilibrium effects from the physical presence of the added gateway, we also report a \emph{system-optimum expansion} baseline in which a centralized controller minimizes total delay after the gateway is added. For affine delay $\ell_e(y)=b_e+a_e y/u_e$, the system optimum is obtained by solving a Wardrop problem with marginal delay $b_e+2a_e y/u_e$ and evaluating the resulting flow under the original delays. If the system optimum avoids the penalty while Wardrop does not, the added gateway is harmful because of adaptive equilibrium use, not because its presence makes the service infeasible.

\begin{proposition}[Marginal-cost system optimum]
\label{prop:system-optimum}
For affine resource delays $\ell_e(y)=b_e+a_e y/u_e$, any Wardrop equilibrium computed with modified delays $\tilde{\ell}_e(y)=b_e+2a_e y/u_e$ minimizes total service delay $\sum_e y_e\ell_e(y_e)$ over the same feasible SFC flow set.
\end{proposition}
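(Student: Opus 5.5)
The plan is to apply Lemma~\ref{lem:beckmann} to the modified delays and then show that the resulting Beckmann potential coincides with total service delay. Write $T(f)=\sum_e y_e(f)\,\ell_e(y_e(f))$. The modified delays $\tilde{\ell}_e(y)=b_e+2a_e y/u_e$ are affine, and they are nondecreasing because $a_e\ge 0$ and $u_e>0$. Lemma~\ref{lem:beckmann} therefore applies: a feasible flow is a Wardrop equilibrium under $\tilde{\ell}$ if and only if it minimizes
\[
\tilde{\Phi}(f)=\sum_e\int_0^{y_e(f)}\tilde{\ell}_e(z)\,dz
\]
over the product of request simplexes $\sum_{p\in\mathcal{P}_r}f_p=d_r$, $f_p\ge 0$.

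The key step is a direct integration. For each resource,
\[
\int_0^{y}\Bigl(b_e+\frac{2a_e z}{u_e}\Bigr)dz=b_e y+\frac{a_e y^2}{u_e}=y\,\ell_e(y).
\]
Summing over $e$ gives $\tilde{\Phi}(f)=T(f)$ identically on the feasible set. Hence any Wardrop equilibrium under $\tilde{\ell}$ minimizes $\tilde{\Phi}$, which means it minimizes $T$. The feasible set is the same because only the delay functions change, not the path or demand constraints.

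There is also a more conceptual route, which gives the same identity through KKT conditions. Compute
\[
\frac{\partial T}{\partial f_p}=\sum_{e\in p}\bigl(\ell_e(y_e)+y_e\ell_e'(y_e)\bigr)=\sum_{e\in p}\tilde{\ell}_e(y_e).
\]
This shows that the KKT conditions of $\min T$ are exactly the Wardrop conditions with marginal delays. $T$ is convex, since it is a sum of convex quadratics composed with linear maps, so these conditions are sufficient for optimality.

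The only point needing care is that Lemma~\ref{lem:beckmann} requires nondecreasing delays, which needs $a_e\ge 0$. I will state this sign assumption explicitly. If some $a_e<0$ were allowed, $T$ could fail to be convex and the equivalence would break. A minor secondary point is that the claim concerns "any" equilibrium. This is handled automatically, because the lemma gives an if-and-only-if for every equilibrium flow, not just a unique one.
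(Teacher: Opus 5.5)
Your proof is correct, and your second (KKT) route is exactly the paper's proof. The paper expands $T(f)=\sum_e\bigl(b_e y_e+a_e y_e^2/u_e\bigr)$, notes convexity from $a_e\ge 0$, computes $\partial T/\partial f_p=\sum_{e\in p}\tilde{\ell}_e(y_e)$, and matches the KKT conditions of $\min T$ with the Wardrop conditions under $\tilde{\ell}$. In effect it re-runs the argument of Lemma~\ref{lem:beckmann} for $T$.

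Your primary route is structured differently. The identity $\int_0^{y}\tilde{\ell}_e(z)\,dz=y\,\ell_e(y)$ shows that the Beckmann potential of the modified game \emph{is} $T$. The proposition then becomes a direct corollary of Lemma~\ref{lem:beckmann}, with no new optimality argument. Because the two programs are literally the same, the lemma's if-and-only-if also gives you the converse for free: every system optimum is a Wardrop equilibrium under $\tilde{\ell}$. The paper states only the forward direction. In exchange, the paper's version is self-contained and does not lean on the lemma.

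Both arguments need exactly the hypothesis you flag, $a_e\ge 0$. This is equivalent both to $\tilde{\ell}_e$ being nondecreasing and to $T$ being convex. Your integral identity is also not special to affine delays, since $\frac{d}{dy}\bigl(y\,\ell_e(y)\bigr)=\ell_e(y)+y\,\ell_e'(y)$ for any differentiable delay. So the same argument works whenever $y\,\ell_e(y)$ is convex.
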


\begin{IEEEproof}
The total-delay objective can be written as
\[
T(f)=\sum_e y_e(f)\ell_e(y_e(f))
=\sum_e \left(b_e y_e(f)+\frac{a_e}{u_e}y_e(f)^2\right).
\]
Because $a_e\ge 0$ and $y_e(f)$ is linear in $f$, $T(f)$ is convex. Its path derivative is
\[
\frac{\partial T}{\partial f_p}
=\sum_{e\in p}\left(b_e+2a_e y_e(f)/u_e\right)
=\sum_{e\in p}\tilde{\ell}_e(y_e(f)).
\]
The KKT conditions for minimizing $T(f)$ over the request-simplex constraints are therefore identical to the Wardrop conditions under modified delay functions $\tilde{\ell}_e$. Hence any Wardrop solution for the modified delays is a system optimum for the original affine-delay model.
\end{IEEEproof}

\subsection{Security Intervention}

A security-management action $\Delta$ modifies the feasible system by adding or exposing a defensive option: a VNF, zero-trust gateway, backup inspection route, migration target, or security shortcut. We write the modified system as $\mathcal{M}'=\mathcal{M}+\Delta$.

For generality, define a free-flow generalized cost $g_p^0=\sum_{e\in p}b_e-\lambda_s\sigma_p$, where $\sigma_p\ge 0$ is a local path security benefit score and $\lambda_s\ge 0$ converts inspection benefit into delay units. A new defensive path is \emph{locally attractive} for request $r$ if its free-flow generalized cost is no larger than that of at least one existing path in $\mathcal{P}_r$. All experiments set $\lambda_s=0$ to isolate delay-driven attractiveness; explicit security-benefit scoring can be added by increasing $\lambda_s$ without changing the equilibrium-screening framework.

The operator's service objective is the demand-weighted average equilibrium cost
\[
J(\mathcal{M})=\frac{1}{\sum_r d_r}\sum_{p\in\mathcal{P}} f_p^\star c_p(f^\star).
\]
For adversarial impact, each resource has risk weight $\rho_e\ge 0$ and exposure multiplier $\chi_e\ge 0$. We use a concentration-sensitive attack-loss proxy
\[
A(f)=\max_e \rho_e\chi_e\left(\frac{y_e(f)}{\sum_r d_r}\right)^2\left(1+\left[\frac{y_e(f)}{u_e}-1\right]_+\right)\sum_r d_r,
\]
where $[x]_+=\max(0,x)$. The proxy reflects the assumption that an adaptive attacker targets high-flow, high-exposure, overloaded chokepoints.

The adversary is not jointly optimized with routing in this paper. Service equilibrium is computed first; only after the equilibrium flow is fixed do we evaluate the resource an adaptive attacker would prefer under $A(f)$. This sequential model matches a setting in which the service controller publishes or exposes an SFC policy and an adversary subsequently probes the induced traffic pattern. It also separates the traffic/orchestration non-monotonicity from a defender-attacker game; the latter is a useful extension but is not required for the Braessian service-cost effect.

\begin{definition}[Security-induced Braess action]
A security-management action $\Delta$ is service-Braessian if
\[
J(\mathcal{M}+\Delta)>J(\mathcal{M})
\]
even though $\Delta$ introduces a locally attractive defensive path under $g_p^0$. It is adversarially Braessian if
\[
A(f^\star_{\mathcal{M}+\Delta})>A(f^\star_{\mathcal{M}}).
\]
\end{definition}

We report the normalized paradox penalty
\[
\Pi(\Delta)=\frac{J(\mathcal{M}+\Delta)-J(\mathcal{M})}{J(\mathcal{M})}
\]
and the Security Braess Ratio
\[
\mathrm{SBR}(\Delta)=\frac{J(\mathcal{M}+\Delta)}{J(\mathcal{M})}.
\]
For a mitigation policy $h$ compared with naive unrestricted expansion, we define
\[
G_A(h)=\frac{J_{\mathrm{naive}}-J_h}{J_{\mathrm{naive}}},
\]
and call this quantity the \emph{Aware gain} when $h$ is the paradox-aware policy.

The Distributed Denial of Service (DDoS) stress-loss metric used in the evaluation is computed after equilibrium by selecting
\[
e^\star\in\arg\max_e \left(\frac{y_e(f^\star)}{u_e},\, y_e(f^\star),\, \chi_e\right)
\]
lexicographically and reporting
\begin{align}
S_{\mathrm{DDoS}}(f^\star)
&=\eta_D\,\theta_{e^\star}(f^\star)\,
\left(1+\left[\theta_{e^\star}(f^\star)U/u_{e^\star}-1\right]_+\right) \nonumber\\
&\quad \times \left(1+\rho_{e^\star}\chi_{e^\star}\right),
\end{align}
where $U=\sum_r d_r$ and $\theta_e(f)=y_e(f)/U$,
with $\eta_D=1.25$ in all experiments. The multiplier $\eta_D$ only fixes the scale of the proxy and does not affect policy comparisons; at full demand share and full utilization the proxy evaluates to $\eta_D(1+\rho_{e^\star}\chi_{e^\star})$. Lexicographic selection by utilization, load, and exposure yields a conservative post-equilibrium adversary: it targets the most stressed resource first, breaking ties by affected traffic and then by exposed value.

\subsection{Operational Interpretation}

The model represents the policy layer of an NFV/SDN security orchestrator. A candidate action $\Delta$ may correspond to exposing a new gateway in a service-function classifier, adding a next-hop group in an SDN controller, advertising a backup inspection chain to tenants, or allowing a new VNF placement site in a MANO workflow. The action can be physically deployed while still being hidden, capped, or reserved from normal chain selection. The paradox is driven by how the defensive asset changes the feasible set used by adaptive traffic and policy controllers, not by the asset's mere existence.

The Wardrop abstraction is a post-policy steady state. It fits federated edge/fog deployments, multi-domain SFC, tenant-selectable service classes, and repeated local re-optimization where no single policy layer prices the congestion externality imposed on shared security resources. The fat-tree experiment serves as a datacenter stress case: it asks what happens when a nominally centralized environment still exposes a shortcut to cost-driven chain selection without marginal congestion pricing. The system-optimum baseline in Section~\ref{sec:evaluation} represents the opposite extreme, a centralized controller that internalizes the externality.

\subsection{Running Example}

Before presenting the general condition, consider a numerical SFC motif with two distributed chains. Let $c=2$, $\alpha=1$, and $D=1$. Without the gateway, the equilibrium splits traffic equally between the two distributed chains. Each chain pays fixed distributed inspection delay $2$ plus load-sensitive delay $1/2$, so the equilibrium service cost is $C_0=2.5$.

Now suppose the operator exposes a zero-trust gateway path with fixed delay $\epsilon=0.75$ that traverses both shared security planes. In free flow, the gateway path is attractive because $0.75<2$. A local evaluation would therefore mark it as a faster security option. After all traffic shifts to the gateway, however, both shared planes carry the entire demand and the gateway path cost is
\[
C_s=\epsilon+2\alpha=2.75.
\]
No infinitesimal user can improve by returning to a distributed chain, because that chain still traverses one of the fully loaded shared planes and costs $c+\alpha=3$. The new equilibrium is therefore stable and strictly worse than the original. The failure is not that the operator added mandatory overhead; it is that an attractive option, once used at equilibrium, coupled resources that were previously separated.

The example also explains why simply adding more security capacity is not the right abstraction. If the gateway were centrally scheduled with marginal congestion prices, the controller could keep some traffic on distributed chains and avoid the penalty. The paradox appears when the defensive option is advertised into a selection process that does not internalize the congestion and risk concentration it creates.

\section{Structural Result}
\label{sec:theory}

The following result gives a sufficient condition for an added defensive shortcut to be Braessian in an SFC motif. The condition is stylized to isolate the mechanism that the topology experiments instantiate at scale.

\begin{theorem}[Braessian defensive shortcut]
\label{thm:braess}
Consider one aggregate service class with demand $D$. Before intervention, it has two feasible security chains. Chain 1 uses a load-sensitive ingress security resource with delay $\alpha y/D$ and then a fixed distributed inspection resource of delay $c$. Chain 2 uses a fixed distributed inspection resource of delay $c$ and then a load-sensitive egress security resource with delay $\alpha y/D$. The pre-intervention equilibrium cost is $C_0=c+\alpha/2$.

Now add a defensive shortcut of fixed delay $\epsilon$ that traverses both load-sensitive security resources. If
\[
c-\frac{3\alpha}{2}<\epsilon\le c-\alpha,
\]
then full use of the new shortcut is the unique Wardrop equilibrium after intervention and the equilibrium cost is strictly higher than $C_0$.
\end{theorem}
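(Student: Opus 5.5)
The plan is the classical Braess verification carried out in three steps. First I confirm the pre-intervention equilibrium. Second I show that the all-shortcut flow satisfies the Wardrop conditions. Third I get uniqueness from Lemma~\ref{lem:beckmann} together with strict convexity in the two load-sensitive loads. Throughout I assume $\alpha>0$; the interval $c-3\alpha/2<\epsilon\le c-\alpha$ is empty otherwise.

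\textbf{Pre-intervention equilibrium.} Before intervention, chain 1 costs $c+\alpha f_1/D$ and chain 2 costs $c+\alpha f_2/D$, with $f_1+f_2=D$. Equalizing the two costs forces $f_1=f_2=D/2$, so the common cost is $C_0=c+\alpha/2$. This split is unique because the potential is strictly convex in $f_1$.

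\textbf{Post-intervention loads and Wardrop check.} After intervention, write $f_1,f_2,f_3$ for the flows on chain 1, chain 2 and the shortcut. The ingress load is $x=f_1+f_3$ and the egress load is $z=f_2+f_3$. The path costs are
\[
c_1=\alpha x/D+c,\qquad c_2=c+\alpha z/D,\qquad c_3=\epsilon+\alpha x/D+\alpha z/D .
\]
At $f_3=D$ we have $x=z=D$, so the shortcut costs $\epsilon+2\alpha$ and each distributed chain costs $c+\alpha$. The Wardrop condition for the used shortcut is therefore exactly $\epsilon+2\alpha\le c+\alpha$, i.e.\ $\epsilon\le c-\alpha$. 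The equilibrium cost $\epsilon+2\alpha$ strictly exceeds $C_0=c+\alpha/2$ exactly when $\epsilon>c-3\alpha/2$. So the two-sided hypothesis gives both existence of this equilibrium and the strict cost increase.

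\textbf{Uniqueness.} By Lemma~\ref{lem:beckmann}, every Wardrop equilibrium minimizes the Beckmann potential
\[
\Phi=\frac{\alpha}{2D}\,(x^2+z^2)+c\,(f_1+f_2)+\epsilon f_3 .
\]
The key observation is that the map from path flows to loads is injective on the feasible set. From $f_1+f_2+f_3=D$ we recover
\[
f_1=D-z,\qquad f_2=D-x,\qquad f_3=x+z-D .
\]
Hence $\Phi$ can be rewritten as a function of $(x,z)$ alone, over a convex feasible region. The linear terms stay linear and the quadratic part $\frac{\alpha}{2D}(x^2+z^2)$ is strictly convex, so $\Phi$ is strictly convex in $(x,z)$. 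It therefore has a unique minimizer in $(x,z)$, which by the injectivity gives a unique minimizer in path flows. Since the all-shortcut flow was already shown to be an equilibrium, it is the unique one.

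\textbf{Main obstacle.} The delicate step is uniqueness, especially at the boundary $\epsilon=c-\alpha$, where the distributed chains tie with the shortcut. A naive argument would only give uniqueness of the loads, not of the path flows. The explicit inversion $(x,z)\mapsto(f_1,f_2,f_3)$ closes that gap, so I would state it as a separate one-line claim before invoking strict convexity.
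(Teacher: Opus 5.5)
Your proposal is correct and follows essentially the same route as the paper. You verify $C_0=c+\alpha/2$ by equalizing the two chains, check the all-shortcut Wardrop condition $\epsilon+2\alpha\le c+\alpha$ and the strict increase $\epsilon+2\alpha>c+\alpha/2$, and obtain uniqueness from strict convexity of the Beckmann potential in the two load-sensitive loads together with the fact that these loads determine all path flows; your explicit inversion $f_1=D-z$, $f_2=D-x$, $f_3=x+z-D$ simply makes the paper's one-line remark concrete, while the paper additionally notes $\epsilon\le c-\alpha<c$ to confirm local attractiveness, which is not part of the theorem's stated conclusion.
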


\begin{IEEEproof}
Before the shortcut is added, let $x$ be the demand on Chain~1 and $D-x$ be the demand on Chain~2. The two path costs are
\[
C_1(x)=\alpha x/D+c,\qquad
C_2(x)=c+\alpha(D-x)/D.
\]
At equilibrium, any two used paths must have equal cost. Since both chains are feasible and costs are continuous and increasing in their own load-sensitive component, the unique split satisfying $C_1(x)=C_2(x)$ is $x=D/2$. Thus the pre-intervention equilibrium cost is $C_0=c+\alpha/2$.

After adding the shortcut, suppose all demand uses it. Then both load-sensitive resources carry load $D$, so the shortcut cost is $C_s=\epsilon+2\alpha$. If an infinitesimal user deviates to either original chain while all other users remain on the shortcut, that deviating chain uses exactly one load-sensitive resource already loaded by $D$ and one fixed distributed inspection resource, so the deviating cost is $c+\alpha$. The condition $\epsilon\le c-\alpha$ implies $C_s=\epsilon+2\alpha\le c+\alpha$, so no user can strictly improve by deviating. Hence full shortcut use is a Wardrop equilibrium. Since $\epsilon \le c - \alpha < c$, the shortcut's free-flow cost is strictly below that of both existing chains, confirming local attractiveness. The Beckmann potential is strictly convex in the loads of the two load-sensitive resources (the fixed-delay resources enter linearly), so those equilibrium loads are unique; since the shortcut is the only path using both load-sensitive resources, the load pair determines the shortcut flow and hence all path flows, and the full-shortcut assignment identified above is the only post-intervention equilibrium. The condition $\epsilon>c-3\alpha/2$ implies $C_s=\epsilon+2\alpha>c+\alpha/2=C_0$, so this unique equilibrium is strictly worse than the original.
\end{IEEEproof}

\begin{remark}
The interval in Theorem~\ref{thm:braess} is nonempty whenever $\alpha>0$. If the shortcut delay must be nonnegative, feasible $\epsilon$ values exist when $c-\alpha\ge 0$ and $\max\{0,c-3\alpha/2\}<c-\alpha$. Thus a nonnegative locally attractive shortcut exists, for example, whenever $c>\alpha$. The theorem does not say that all security gateways are harmful. It states a management condition: when a locally attractive defensive option couples multiple load-sensitive security resources, the new equilibrium can be worse than the original feasible set.
\end{remark}

\begin{proposition}[SFC shared-resource extension]
\label{prop:sfc-extension}
Consider $m\ge 2$ symmetric distributed SFC paths for one aggregate demand $D$. Each path has one load-sensitive security resource with delay $\alpha y/D$ and fixed distributed inspection delay $c$. Before intervention, the unique symmetric equilibrium has cost $C_0=c+\alpha/m$. Now add a defensive chain with fixed delay $\epsilon$ that traverses all $m$ load-sensitive security resources. If
\[
c+\frac{\alpha}{m}<\epsilon+m\alpha\le c+\alpha,
\]
then full use of the defensive chain is a Wardrop equilibrium and its equilibrium cost is strictly larger than the pre-intervention cost.
\end{proposition}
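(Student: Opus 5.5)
The plan mirrors the proof of Theorem~\ref{thm:braess}, replacing the two-chain motif with $m$ symmetric chains.

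\textbf{Pre-intervention equilibrium.} Let $x_i$ be the flow on distributed path $i$, so $\sum_i x_i=D$ and path $i$ costs $C_i=c+\alpha x_i/D$. I will invoke Lemma~\ref{lem:beckmann}. The potential equals $\sum_i \alpha x_i^2/(2D)$ plus a constant on the simplex. It is strictly convex, so its minimizer is unique, and by symmetry it is $x_i=D/m$. The common cost is therefore $C_0=c+\alpha/m$. Equivalently, the Wardrop equal-cost condition on used paths forces all $x_i$ to be equal.

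\textbf{Full use of the defensive chain is a Wardrop equilibrium.} Put all demand $D$ on the new chain. Each of the $m$ load-sensitive resources then carries load $D$ and has delay $\alpha$. The chain's cost is $C_s=\epsilon+m\alpha$. An infinitesimal user deviating to distributed path $i$ pays the fixed delay $c$ plus one saturated resource, i.e.\ $c+\alpha$. The hypothesis $\epsilon+m\alpha\le c+\alpha$ gives $C_s\le c+\alpha$, so no deviation strictly improves, and the Wardrop condition holds.

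\textbf{Cost comparison.} The left inequality $c+\alpha/m<\epsilon+m\alpha$ gives $C_s>C_0$. Since every unit of demand pays $C_s$, the equilibrium average cost $J$ equals $C_s$, which is strictly larger than the pre-intervention cost.

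\textbf{Remarks on scope.} Unlike the theorem, the proposition claims only that full use is an equilibrium, not that it is unique, so no uniqueness argument is needed. If one wanted uniqueness, I would reuse the theorem's reasoning. Strict convexity of the Beckmann potential in the $m$ resource loads makes the load vector unique. Writing $y_i=x_i+s$, where $s$ is the shortcut flow, and using $\sum_i x_i+s=D$, we get $\sum_i y_i=D+(m-1)s$. Hence the loads determine $s$, and with it every $x_i$.

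The argument is elementary. The only step needing care, the main obstacle such as it is, is checking that the interval is nonempty. The condition requires $c+\alpha/m<c+\alpha$, which holds for $m\ge 2$ and $\alpha>0$. The conclusion is vacuous otherwise, so this is not strictly needed.
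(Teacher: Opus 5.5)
Your proof is correct and follows the paper's argument. Before intervention you get the equal split $D/m$. For full use of the new chain you compare its cost $\epsilon+m\alpha$ against the deviation cost $c+\alpha$, and the left inequality gives $C_s>C_0$. The differences from the paper are minor:
\begin{itemize}
\item You justify pre-intervention uniqueness through strict convexity of the Beckmann potential, where the paper cites symmetry and monotonicity.
\item You add a valid but unneeded uniqueness argument via $\sum_i y_i=D+(m-1)s$.
\item You omit the paper's side remark that the upper bound forces $\epsilon\le c-(m-1)\alpha<c$, i.e.\ local attractiveness. The statement does not claim this, so nothing is lost.
\end{itemize}
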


\begin{IEEEproof}
Before intervention, symmetry and strict monotonicity of each distributed load-sensitive resource imply that the unique equilibrium load assigns $D/m$ to each distributed chain, giving cost $C_0=c+\alpha/m$. After the defensive chain is added, suppose all demand uses it. Each distributed load-sensitive resource carries load $D$, so the defensive-chain cost is $C_s=\epsilon+m\alpha$. If an infinitesimal user deviates to any distributed chain, that chain's load-sensitive resource is already loaded by $D$, and the deviating cost is $c+\alpha$. The upper-bound condition $C_s\le c+\alpha$ therefore rules out any strictly improving deviation. It also implies $\epsilon\le c-(m-1)\alpha<c$, so the defensive chain is locally attractive in free-flow delay. The lower-bound condition $C_s>C_0$ makes the post-intervention equilibrium strictly more costly than the pre-intervention equilibrium. This proves a Braessian SFC condition whenever a locally attractive defensive chain couples the security resources that were previously separated across distributed chains.
\end{IEEEproof}

Proposition~\ref{prop:sfc-extension} translates the mechanism into SFC terms. The harmful option is a service chain that couples security resources whose congestion externalities were previously separated across distributed chains. The topology experiments in Section~\ref{sec:evaluation} use the same pattern: the zero-trust gateway path traverses shared ingress and egress security planes and changes both load concentration and attacker target value.

\section{Paradox-Aware Orchestration}
\label{sec:algorithm}

The mitigation principle is to evaluate defensive options as equilibrium-shaping resources before exposing them to normal service-chain selection. A new option may still be useful for failover, incident response, or selected tenants, but its use should be constrained when the unrestricted equilibrium is Braessian.

\begin{algorithm}[t]
\caption{Paradox-Aware Defensive Option Screening}
\label{alg:screening}
\begin{algorithmic}[1]
\REQUIRE Baseline system $\mathcal{M}$, candidate action $\Delta$, penalty threshold $\tau$, cap grid $\mathcal{K}$
\STATE Compute baseline equilibrium $f^\star$ and cost $J(\mathcal{M})$
\STATE Construct candidate system $\mathcal{M}'=\mathcal{M}+\Delta$
\STATE Compute unrestricted equilibrium $f'^\star$ and penalty $\Pi(\Delta)$
\IF{$\Pi(\Delta)\le \tau$}
    \STATE expose $\Delta$ without restriction
\ELSE
    \FOR{cap fraction $\kappa\in\mathcal{K}$ from large to small}
        \STATE constrain each new-option path by $f_p\le \kappa d_r$
        \STATE recompute equilibrium and penalty $\Pi_\kappa$
        \IF{$\Pi_\kappa\le \tau$}
            \STATE expose $\Delta$ under cap $\kappa$; \textbf{return}
        \ENDIF
    \ENDFOR
    \STATE apply operator fallback: reserve $\Delta$ for failover or hide it from normal routing
\ENDIF
\end{algorithmic}
\end{algorithm}

The algorithm leaves the security function deployed and controls its visibility to ordinary chain selection. A fallback has two common forms: reserve the option for failover, in which case it remains available during incidents but is absent from ordinary path selection; or hide it from normal placement until the traffic matrix or policy changes. In the worst case, the screen solves $|\mathcal{K}|$ convex equilibrium problems, one per candidate cap level. The screen runs offline before deployment, and the implementation uses Frank--Wolfe all-or-nothing assignments over each tenant request with exact line search for affine resource delays.

In an SDN/NFV deployment, the actions in Algorithm~\ref{alg:screening} map to standard control-plane mechanisms. Unrestricted exposure corresponds to admitting the new chain in the service-function classifier. A cap can be implemented by weighted next-hop groups, rate limits, tenant-specific policy rules, or admission-control quotas on the gateway chain. Reserving the option for failover means the gateway remains deployed but absent from ordinary path computation until a failure or incident policy activates it. Thus the algorithm is not an online packet scheduler; it is a pre-deployment policy screen that decides how much of a defensive option should be made visible to normal orchestration.

\section{Experimental Evaluation}
\label{sec:evaluation}

\subsection{Topology-Derived SFC Suite}

The experiments use four deterministic topology-derived scenarios:
\begin{itemize}[leftmargin=*,topsep=2pt,itemsep=2pt]
    \item \textbf{Fat-tree datacenter:} a $k=4$ fat-tree with 36 nodes and 48 links.
    \item \textbf{NSFNET-style WAN:} a 14-node, 19-link backbone.
    \item \textbf{GEANT-style WAN:} a 22-node, 35-link European backbone abstraction.
    \item \textbf{Edge/fog:} regional edge clusters connected through fog and cloud-core nodes.
\end{itemize}

For each topology, peripheral nodes are selected as tenant sources and destinations and centrality-ranked nodes are used as candidate VNF sites. Before intervention, each request can use two distributed FW--IDS--WAF chains. The defensive intervention adds a shared zero-trust gateway path with lower free-flow delay and higher exposure. Link resources inherit latency and capacity from the topology generator; VNF resources include fixed processing delay, load sensitivity, capacity, risk, and exposure.

\input{tables/experiment_parameters}

\subsection{Implementation Details}

The experiment driver constructs a multi-commodity SFC instance for each topology and solves the post-policy equilibrium using Frank--Wolfe iterations. At each iteration, every tenant request performs an all-or-nothing assignment to the least-cost feasible chain under the current resource loads; exact line search is available because all resource delays are affine. This traffic-engineering abstraction lets us evaluate the equilibrium induced by a management action before a specific firewall, IDS, or gateway implementation is selected.

The topology generator assigns link latency, capacity, risk, and exposure from structural roles in the graph. Centrality-ranked nodes host candidate security functions, while peripheral nodes generate tenant demands. The experiments do not measure a specific deployed gateway; they embed the same SFC construction into datacenter, WAN, and edge/fog substrates with different path diversity and centrality patterns. The default parameters instantiate the Braessian regime from Section~\ref{sec:theory}. The sensitivity analysis shows where the gateway becomes beneficial or non-Braessian.

\subsection{Baselines and Policies}

We compare seven policies under the same equilibrium and attack-loss evaluation:
\begin{enumerate}[leftmargin=*,topsep=2pt,itemsep=2pt]
    \item \textbf{No expansion:} original distributed chains only.
    \item \textbf{Naive expansion:} expose the new gateway to every tenant without restriction.
    \item \textbf{Load-aware cap:} cap each tenant's gateway-path flow at 50\% of demand.
    \item \textbf{Risk-aware surcharge:} add a static $\rho_e\chi_e$ surcharge to path scoring, then evaluate the selected flow under the original delay and loss model.
    \item \textbf{Min--max utilization cap:} choose the gateway cap that minimizes the maximum resource utilization over the cap grid.
    \item \textbf{System optimum:} centrally minimize total delay after the gateway is added, using marginal-cost routing.
    \item \textbf{Paradox-aware:} run Algorithm~\ref{alg:screening} with $\tau=0.02$ and choose the loosest cap satisfying the threshold.
\end{enumerate}

The risk-aware baseline optimizes an auxiliary local score but is evaluated under the same service and attack-loss model as all other policies. Risk-aware scoring can avoid exposed resources even when they have low latency; min--max utilization can hide the gateway if any use raises the peak load. These conservative baselines test whether paradox-aware orchestration can preserve some defensive availability while keeping the equilibrium penalty below the operator's threshold.

\input{tables/topology_summary}

\begin{figure*}[t]
\centering
\includegraphics[width=0.94\textwidth]{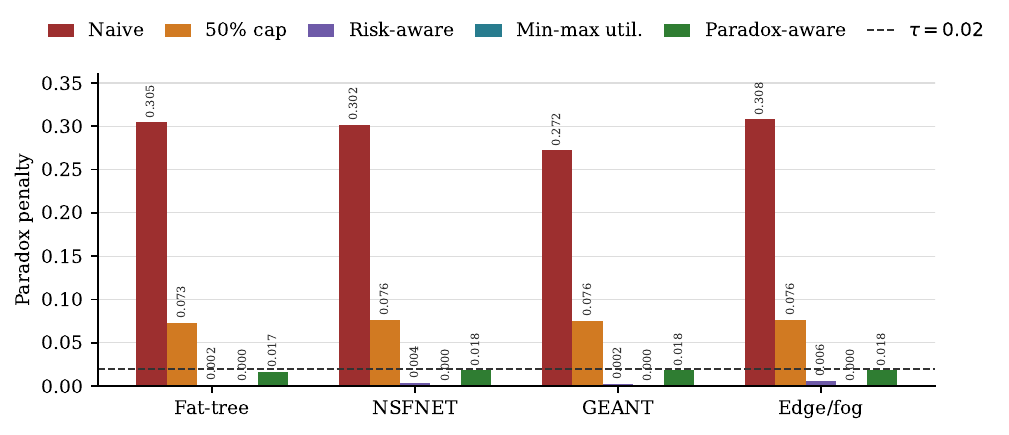}
\caption{Paradox penalty across policies and topologies. Naive expansion is consistently Braessian, while risk-aware and min--max-utilization baselines often avoid the gateway entirely. Paradox-aware orchestration keeps the penalty below $\tau=0.02$ while preserving capped gateway access.}
\label{fig:policy-penalties}
\end{figure*}

Table~\ref{tab:topology-summary} and Fig.~\ref{fig:policy-penalties} report the main cross-topology result. Naive expansion is Braessian in every topology, with penalties between 27.2\% and 30.8\%. The Security Braess Ratio ranges from 1.272 to 1.308. Paradox-aware constrained use keeps the residual penalty below 1.9\% while preserving limited access to the added security gateway.

\input{tables/policy_comparison}

Table~\ref{tab:policy-comparison} shows the policy-level tradeoff. The 50\% load-aware cap reduces the naive penalty but remains 7.3--7.6\% above the baseline. Risk-aware scoring and min--max utilization are safer on service cost because they assign little or no ordinary traffic to the gateway; they act like conservative suppression policies. The system-optimum baseline eliminates the penalty, confirming that the added gateway is not physically harmful when a centralized controller internalizes congestion externalities. Paradox-aware screening permits a 25\% gateway share in the default setting, keeps the Wardrop penalty around 1.7--1.8\%, and reduces expected attack loss by preventing full concentration on the new gateway. The similar aware penalties across topologies reflect the binding threshold $\tau=0.02$ and the discrete cap grid, not a claim of topology-invariant residual harm. We use gateway share as a defensive-availability metric: unlike suppression baselines, paradox-aware orchestration keeps the new security function available to ordinary traffic under a bounded equilibrium penalty.

\subsection{Attack Stress}

The attack stress test targets resources after equilibrium and reports service-loss proxies. The deterministic DDoS column targets the most loaded resource, asking whether orchestration creates a high-flow chokepoint that an adaptive attacker would prefer. To reduce coupling between the metric and the mitigation, Table~\ref{tab:attack-stress} also reports a weighted attacker that samples targets with probability proportional to utilization and exposure. Naive expansion raises both stress-loss measures in every topology. Paradox-aware capping reduces the losses while keeping the security option available for limited use.

\input{tables/attack_stress_summary}

\begin{figure*}[t]
\centering
\includegraphics[width=0.94\textwidth]{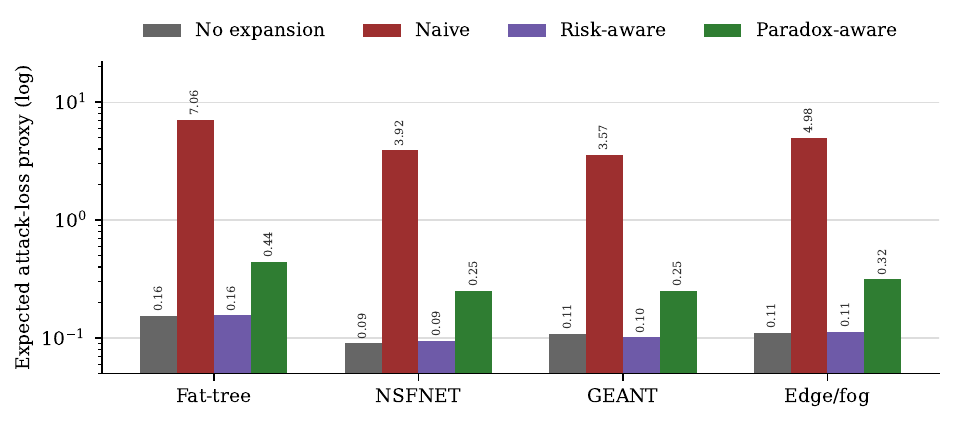}
\caption{Expected attack-loss proxy after service equilibrium. Naive exposure creates a high-value target in every topology. Paradox-aware capping reduces the target value while still allowing controlled gateway use; risk-aware scoring is conservative and often avoids the gateway entirely.}
\label{fig:attack-loss}
\end{figure*}

Fig.~\ref{fig:attack-loss} shows the expected targeted loss proxy. The largest reductions occur because the gateway no longer receives nearly all tenant demand. The metric is a concentration-sensitive stress measure, not a calibrated prediction of dollars lost or packets dropped.

\subsection{Nonlinear-Curve Robustness}

The main model uses affine delay to retain a convex equilibrium program and closed-form line search. Table~\ref{tab:queueing-robustness} re-evaluates the baseline, naive, and paradox-aware flows under a steeper BPR-style nonlinear delay curve,
\[
\hat{\ell}_e(y)=b_e+a_e\frac{y}{u_e}\left(1+0.15\left(\frac{y}{u_e}\right)^4\right),
\]
with utilization clipped at $1.5$ for finite stress evaluation. The calculation does not replace a measured VNF service curve; it tests whether the same flow assignments remain harmful under a nonlinear congestion penalty. The ordering is unchanged: naive exposure has the largest nonlinear-curve cost, while paradox-aware capping remains close to the baseline.

\input{tables/queueing_robustness}

\subsection{Sensitivity Analysis}

Table~\ref{tab:sensitivity} reports one-factor sweeps on the NSFNET-style topology. The paradox is strongest when the gateway is highly attractive in free-flow delay and when shared security planes are sufficiently load-sensitive. When load sensitivity is too low, the new gateway can be beneficial rather than Braessian; when the threshold $\tau$ is relaxed, the algorithm exposes a larger gateway share and the residual penalty rises by design. Gateway capacity and risk/exposure scaling affect only the post-equilibrium attack-loss proxy, not the Wardrop service cost, because the default gateway has zero service-delay slope ($a_g=0$): its capacity enters the evaluation only through the overload term of the loss proxies.

\input{tables/sensitivity_summary}

\begin{figure*}[t]
\centering
\includegraphics[width=0.94\textwidth]{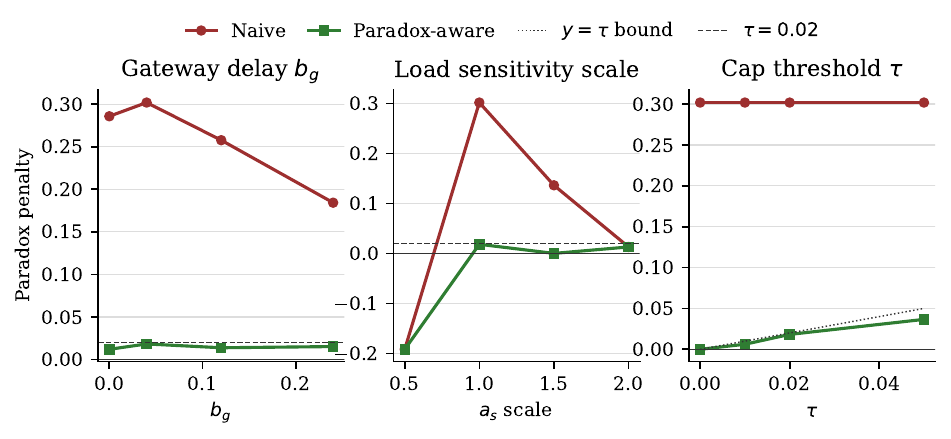}
\caption{Sensitivity of the paradox penalty on the NSFNET-style topology. The dashed line is the default screening threshold $\tau=0.02$. The paradox is a regime property: it appears when the gateway is sufficiently attractive and when shared security resources are sufficiently load-sensitive.}
\label{fig:sensitivity-phase}
\end{figure*}

Fig.~\ref{fig:sensitivity-phase} shows the phase behavior. Increasing gateway delay eventually weakens the paradox because fewer flows find the gateway attractive. Reducing shared-resource load sensitivity can make the gateway beneficial, producing negative penalty; very high sensitivity can also make the shortcut unattractive enough that unrestricted use weakens. Raising $\tau$ does not change the naive equilibrium; it changes how much residual penalty the paradox-aware controller tolerates in exchange for gateway availability. This threshold is the primary tuning parameter in the screening method.

\subsection{Reproducibility}

The experiment code is available at \url{https://github.com/dcommey/braess-paradox-in-networks}. The driver regenerates the result tables, figures, and regression tests reported here.

\section{Discussion}

Three management implications follow from the experiments.

First, defensive option expansion should not be evaluated only by free-flow latency or a local inspection-benefit score. In all four topology-derived scenarios, the new zero-trust gateway is locally attractive, but unrestricted use increases equilibrium cost by roughly 30\%.

Second, load awareness alone is insufficient. A generic 50\% cap reduces congestion but still leaves nontrivial residual penalty and high risk concentration. The appropriate cap depends on the post-adaptation equilibrium.

Third, paradox-aware control does not conflict with defense-in-depth. The algorithm does not delete a gateway or disable a security VNF; it exposes the option under rate limits, reserves it for failover, or restricts it to selected chains.

\subsection{Design Guidance}

In deployment, candidate security actions should first be evaluated under their unrestricted post-adaptation equilibrium, along with local latency, capacity, and inspection benefit. If the unrestricted equilibrium is safe, the action can be exposed normally. If it is Braessian, the operator can test constrained exposure: per-tenant caps, class-specific eligibility, failover-only activation, or randomized access. If all constrained policies remain unsafe, the option should be hidden from ordinary orchestration.

The risk-aware and min--max-utilization baselines in Table~\ref{tab:policy-comparison} often avoid the gateway almost entirely, which is safe but overly cautious. The system optimum shows that the gateway can be harmless under a controller that prices the congestion externality. Paradox-aware screening occupies the middle ground: the operator may lack or be unable to enforce a perfect system optimum, but still wants to expose the defensive option to some traffic rather than suppress it.

The cap threshold $\tau$ is an operator policy parameter rather than a universal constant. Lower $\tau$ values prioritize strict service resilience and may reserve the gateway for incidents only. Higher $\tau$ values allow more ordinary gateway use at the cost of residual equilibrium penalty. Gateway free-flow delay and shared-resource load sensitivity are design-time signals: attractive shortcuts through highly load-sensitive planes should be screened before they are advertised to tenants or automated chain-placement policies.

\subsection{Path to Deployment}

A production orchestrator would not need to solve the screening problem for every packet or flowlet. It would run the screen when a new VNF instance, gateway, backup route, or migration target is introduced; when a traffic matrix changes materially; or when a policy update changes which tenants can use a chain. The output can be encoded as ordinary control-plane state: service-function classifier rules, path weights, admission quotas, failover flags, or tenant-specific eligibility constraints. Existing SFC/NFV control loops can therefore use the screen as an offline policy step.

\subsection{Limitations}

The evaluation uses affine delay functions to keep the equilibrium program convex and the sufficient condition interpretable; production deployments may require BPR-style latency curves, queueing models, or measured VNF service curves. The adversary is evaluated sequentially after the service equilibrium, so the model does not solve a joint defender-attacker game. Demand is deterministic within each experiment, and the topology parameters are synthetic rather than measured from deployed firewall, IDS, WAF, or gateway appliances. These choices isolate the management-level non-monotonicity from implementation-specific effects. The screening framework requires only a callable equilibrium solver, so measured service curves, multi-domain policy constraints, stochastic traffic matrices, or calibrated attacker models can replace the current inputs when available.

\section{Conclusion}

Adding a locally attractive defensive option to an SFC system can worsen the equilibrium by concentrating traffic and adversarial value on shared security resources. The sufficient condition in this paper identifies one such security-induced Braess paradox, and the screening algorithm caps or reserves harmful options before deployment. In topology-derived datacenter, WAN, and edge/fog experiments, naive expansion raised equilibrium cost by up to 30.8\%, while paradox-aware orchestration kept the residual penalty below 1.9\%. Defensive options should be managed as security controls and as equilibrium-shaping resources.

\bibliographystyle{IEEEtran}
\bibliography{references}

\end{document}

%% file: tables/experiment_parameters.tex
% Auto-generated by scripts/run_experiments.py
\begin{table}[t]
\centering
\caption{Default experiment parameters.}
\label{tab:experiment-parameters}
\begin{tabular}{ll}
\toprule
Parameter & Value \\
\midrule
Tenant demand $d_r$ & 1.0 for every request \\
Gateway free-flow delay $b_g$ & 0.04 \\
Gateway load slope $a_g$ & 0.0 \\
Gateway capacity factor & 1.0 times total demand \\
Shared security-plane slope & 1.0 \\
Distributed IDS delay & 1.04--1.076 \\
Link base delay & $0.025\times$ topology latency \\
Link load slope & $0.02\times$ topology latency \\
Gateway risk/exposure & $\rho_g=0.42$, $\chi_g=2.4$ \\
Shared-plane risk/exposure & $\rho=0.08$, $\chi=1.1$ \\
Penalty threshold $\tau$ & 0.02 \\
Risk-aware surcharge weight & 1.0 \\
Cap grid $\mathcal{K}$ & 1.0, 0.75, 0.5, 0.35, 0.25, 0.15, 0.1, 0.0 \\
Frank--Wolfe tolerance & $10^{-5}$ relative gap \\
\bottomrule
\end{tabular}
\end{table}

%% file: tables/topology_summary.tex
% Auto-generated by scripts/run_experiments.py
\begin{table*}[t]
\centering
\caption{Topology-derived SFC/NFV results. Paradox-aware values use pre-deployment screening and constrained use of the added gateway.}
\label{tab:topology-summary}
\begin{tabular}{lrrrrrrr}
\toprule
Topology & Nodes & Edges & Requests & Naive penalty & Naive SBR & Aware penalty & Aware gain \\
\midrule
Fat-tree & 36 & 48 & 7 & 0.305 & 1.305 & 0.017 & 22.1\% \\
NSFNET & 14 & 19 & 4 & 0.302 & 1.302 & 0.018 & 21.8\% \\
GEANT & 22 & 35 & 4 & 0.272 & 1.272 & 0.018 & 20.0\% \\
Edge/fog & 25 & 30 & 5 & 0.308 & 1.308 & 0.018 & 22.1\% \\
\bottomrule
\end{tabular}
\end{table*}

%% file: tables/policy_comparison.tex
% Auto-generated by scripts/run_experiments.py
\begin{table*}[t]
\centering
\scriptsize
\caption{Policy comparison under the same adaptive equilibrium model.}
\label{tab:policy-comparison}
\begin{tabular}{llrrrrrr}
\toprule
Topology & Policy & Cost & Penalty & Risk conc. & Attack loss & Max util. & Gateway share \\
\midrule
Fat-tree & No expansion & 1.581 & 0.000 & 0.083 & 0.155 & 0.502 & 0.00 \\
Fat-tree & Naive expansion & 2.062 & 0.305 & 0.633 & 7.056 & 1.000 & 1.00 \\
Fat-tree & Load-aware cap & 1.697 & 0.073 & 0.451 & 1.764 & 0.752 & 0.50 \\
Fat-tree & Risk-aware & 1.584 & 0.002 & 0.088 & 0.155 & 0.502 & 0.00 \\
Fat-tree & Min-max util. & 1.581 & 0.000 & 0.083 & 0.155 & 0.502 & 0.00 \\
Fat-tree & System optimum & 1.580 & -0.000 & 0.080 & 0.164 & 0.517 & 0.03 \\
Fat-tree & Paradox-aware & 1.607 & 0.017 & 0.273 & 0.441 & 0.627 & 0.25 \\
NSFNET & No expansion & 1.587 & 0.000 & 0.101 & 0.090 & 0.506 & 0.00 \\
NSFNET & Naive expansion & 2.066 & 0.302 & 0.619 & 3.917 & 1.000 & 0.99 \\
NSFNET & Load-aware cap & 1.708 & 0.076 & 0.458 & 1.008 & 0.755 & 0.50 \\
NSFNET & Risk-aware & 1.593 & 0.004 & 0.105 & 0.094 & 0.518 & 0.00 \\
NSFNET & Min-max util. & 1.587 & 0.000 & 0.101 & 0.090 & 0.506 & 0.00 \\
NSFNET & System optimum & 1.586 & -0.000 & 0.091 & 0.094 & 0.518 & 0.03 \\
NSFNET & Paradox-aware & 1.616 & 0.018 & 0.290 & 0.252 & 0.631 & 0.25 \\
GEANT & No expansion & 1.595 & 0.000 & 0.091 & 0.109 & 0.514 & 0.00 \\
GEANT & Naive expansion & 2.029 & 0.272 & 0.581 & 3.573 & 0.979 & 0.94 \\
GEANT & Load-aware cap & 1.716 & 0.076 & 0.421 & 1.008 & 0.764 & 0.50 \\
GEANT & Risk-aware & 1.599 & 0.002 & 0.097 & 0.101 & 0.536 & 0.00 \\
GEANT & Min-max util. & 1.595 & 0.000 & 0.091 & 0.109 & 0.514 & 0.00 \\
GEANT & System optimum & 1.592 & -0.002 & 0.122 & 0.106 & 0.550 & 0.08 \\
GEANT & Paradox-aware & 1.624 & 0.018 & 0.257 & 0.252 & 0.639 & 0.25 \\
Edge/fog & No expansion & 1.577 & 0.000 & 0.050 & 0.110 & 0.501 & 0.00 \\
Edge/fog & Naive expansion & 2.063 & 0.308 & 0.487 & 4.979 & 1.000 & 0.99 \\
Edge/fog & Load-aware cap & 1.698 & 0.076 & 0.298 & 1.260 & 0.751 & 0.50 \\
Edge/fog & Risk-aware & 1.587 & 0.006 & 0.053 & 0.113 & 0.506 & 0.00 \\
Edge/fog & Min-max util. & 1.577 & 0.000 & 0.050 & 0.110 & 0.501 & 0.00 \\
Edge/fog & System optimum & 1.577 & -0.000 & 0.048 & 0.115 & 0.511 & 0.02 \\
Edge/fog & Paradox-aware & 1.606 & 0.018 & 0.155 & 0.315 & 0.626 & 0.25 \\
\bottomrule
\end{tabular}
\end{table*}

%% file: tables/attack_stress_summary.tex
% Auto-generated by scripts/run_experiments.py
\begin{table}[t]
\centering
\caption{Post-equilibrium attack stress tests. DDoS targets the most loaded resource; weighted loss randomizes targets proportional to utilization and exposure.}
\label{tab:attack-stress}
\begin{tabular}{lrrrrr}
\toprule
Topology & DDoS-B & DDoS-N & DDoS-A & W-N & W-A \\
\midrule
Fat-tree & 0.682 & 2.510 & 0.852 & 3.114 & 0.114 \\
NSFNET & 0.688 & 1.360 & 0.858 & 1.754 & 0.083 \\
GEANT & 0.699 & 1.332 & 0.869 & 1.512 & 0.087 \\
Edge/fog & 0.681 & 1.360 & 0.851 & 1.616 & 0.073 \\
\bottomrule
\end{tabular}
\end{table}

%% file: tables/queueing_robustness.tex
% Auto-generated by scripts/run_experiments.py
\begin{table}[t]
\centering
\caption{Robustness under nonlinear delay evaluation.}
\label{tab:queueing-robustness}
\begin{tabular}{lrrr}
\toprule
Topology & Baseline & Naive & Aware \\
\midrule
Fat-tree & 1.586 & 2.362 & 1.625 \\
NSFNET & 1.591 & 2.353 & 1.634 \\
GEANT & 1.600 & 2.281 & 1.642 \\
Edge/fog & 1.582 & 2.358 & 1.624 \\
\bottomrule
\end{tabular}
\end{table}

%% file: tables/sensitivity_summary.tex
% Auto-generated by scripts/run_experiments.py
\begin{table*}[t]
\centering
\caption{One-factor sensitivity analysis on the NSFNET-style topology.}
\label{tab:sensitivity}
\begin{tabular}{llrrrrr}
\toprule
Parameter & Value & Naive penalty & Aware penalty & Aware gain & Naive attack & Cap \\
\midrule
$b_g$ & 0.00 & 0.286 & 0.012 & 21.3\% & 4.032 & 0.25 \\
$b_g$ & 0.04 & 0.302 & 0.018 & 21.8\% & 3.917 & 0.25 \\
$b_g$ & 0.12 & 0.258 & 0.014 & 19.4\% & 2.818 & 0.15 \\
$b_g$ & 0.24 & 0.184 & 0.015 & 14.3\% & 1.454 & 0.10 \\
$u_g$ factor & 0.50 & 0.302 & 0.018 & 21.8\% & 7.721 & 0.25 \\
$u_g$ factor & 0.75 & 0.302 & 0.018 & 21.8\% & 5.148 & 0.25 \\
$u_g$ factor & 1.00 & 0.302 & 0.018 & 21.8\% & 3.917 & 0.25 \\
$u_g$ factor & 1.50 & 0.302 & 0.018 & 21.8\% & 3.917 & 0.25 \\
$a_s$ factor & 0.50 & -0.192 & -0.192 & 0.0\% & 4.032 & 1.00 \\
$a_s$ factor & 1.00 & 0.302 & 0.018 & 21.8\% & 3.917 & 0.25 \\
$a_s$ factor & 1.50 & 0.136 & 0.000 & 12.0\% & 0.469 & 0.00 \\
$a_s$ factor & 2.00 & 0.013 & 0.013 & 0.0\% & 0.094 & 1.00 \\
$\tau$ & 0.00 & 0.302 & 0.000 & 23.2\% & 3.917 & 0.00 \\
$\tau$ & 0.01 & 0.302 & 0.006 & 22.7\% & 3.917 & 0.15 \\
$\tau$ & 0.02 & 0.302 & 0.018 & 21.8\% & 3.917 & 0.25 \\
$\tau$ & 0.05 & 0.302 & 0.037 & 20.4\% & 3.917 & 0.35 \\
$\rho_g,\chi_g$ scale & 0.50 & 0.302 & 0.018 & 21.8\% & 0.979 & 0.25 \\
$\rho_g,\chi_g$ scale & 1.00 & 0.302 & 0.018 & 21.8\% & 3.917 & 0.25 \\
$\rho_g,\chi_g$ scale & 1.50 & 0.302 & 0.018 & 21.8\% & 8.813 & 0.25 \\
$\rho_g,\chi_g$ scale & 2.00 & 0.302 & 0.018 & 21.8\% & 15.668 & 0.25 \\
\bottomrule
\end{tabular}
\end{table*}